\documentclass[twocolumn, nofootinbib, superscriptaddress, longbibliography]{revtex4-2}

\usepackage[pdftex]{graphicx}
\usepackage{mathrsfs,amsmath,amsthm}
\usepackage[colorlinks, breaklinks, urlcolor={blue}, linkcolor={red}, citecolor={blue}]{hyperref}
\usepackage{array}
\usepackage{amsmath}
\usepackage{type1cm}
\usepackage{lettrine}
\usepackage[english]{babel}
\usepackage{lmodern}
\usepackage{microtype}
\usepackage{booktabs}
\usepackage[T1]{fontenc}
\usepackage[boxed, vlined]{algorithm2e}
\usepackage{caption}
\usepackage{braket}
\usepackage{optidef}
\usepackage[nobreak=true]{mdframed}

% \newmdtheoremenv{theorem}{Theorem}
\newmdtheoremenv{definition}{Definition}
\newmdtheoremenv{scheme}{Scheme}

\captionsetup[figure]{margin=0pt, font=small, labelfont=bf, labelsep=endash, justification=centerlast, labelsep=colon}

\def\>{{\rangle}}
\def\<{{\langle}}
\def\depth{{\texttt{depth}}}

\newtheorem{theorem}{Theorem} 
\newtheorem{lemma}[theorem]{Lemma} 
%\newtheorem{corollary}[theorem]{Corollary}
%\newtheorem{algo}[theorem]{Algorithm}
%\newtheorem{conjecture}[theorem]{Conjecture} 
%\newtheorem{proposition}[theorem]{Proposition}
%\theoremstyle{definition} 
% \newtheorem{definition}[theorem]{Definition}
% \newtheorem{example}[theorem]{Example}
% \newtheorem{remark}{Remark}
%\numberwithin{equation}{section}

\begin{document}

% \title{Secure delegated error-corrected quantum computation based on permutational-key quantum homomorphic encryption}
\title{Permutational-key quantum homomorphic encryption with homomorphic quantum error-correction}

\author{Yingkai Ouyang}
\email[]{y.ouyang@sheffield.ac.uk}
\homepage{http://www.quantumbespoke.com}
\affiliation{School of Mathematical and Physical Sciences, University of Sheffield, Sheffield, S3 7RH, United Kingdom}
\affiliation{Department of Engineering and Computer Science, National University of Singapore}
\affiliation{Centre of Quantum Technologies, National University of Singapore}

\author{Peter P. Rohde}
\email[]{dr.rohde@gmail.com}
\homepage{http://www.peterrohde.org}
\affiliation{Centre for Quantum Software \& Information (QSI), University of Technology Sydney, NSW 2007, Australia}
\affiliation{Hearne Institute for Theoretical Physics, Department of Physics \& Astronomy, Louisiana State University, Baton Rouge LA, United States}

\date{\today}

\frenchspacing

\begin{abstract}
The gold-standard for security in quantum cryptographic protocols is information-theoretic security. Information-theoretic security is surely future-proof, because it makes no assumptions on the hardness of any computational problems and relies only on the fundamental laws of quantum mechanics. Here, we revisit a permutational-key quantum homomorphic encryption protocol with information-theoretic security. We explain how to integrate this protocol with quantum error correction that has the error correction encoding as a homomorphism. This feature enables both client and server to apply the encoding and decoding step for the quantum error correction, without use of the encrypting permutation-key.
\end{abstract}

\maketitle

\section{Introduction}

Future quantum computing infrastructure is likely to be accessible to most end users via cloud-based services, owing to the high infrastructure cost. Similarly, the high-value of the applications under demand makes security considerations paramount \cite{bib:RohdeQI}. Quantum homomorphic encryption (QHE) is a class of cryptographic protocols allowing quantum computations to be performed on encrypted data without the server performing the computation having to first decrypt it, as is the case with conventional computational outsourcing. However, quantum computing differs from classical computing in that quantum error correction (QEC) is necessary to achieve fault-tolerance \cite{campbell2017roads} under realistic noise processes.

A salient feature of quantum cryptographic protocols is that they can be imbued with security based solely on the fundamental laws of quantum mechanics. In contrast with traditional cryptographic schemes whose security depends on the assumed hardness of certain computational problems, quantum protocols with information-theoretic (IT) security remain provably future-proof;  future developments in quantum computation or classical algorithms will never invalidate the security guarantee of cryptographic protocols with IT security.

Quantum cryptographic protocols are typically studied in ideal scenarios where no environmental noise is present. To combat environmental noise in these quantum protocols it is natural to consider their integration with quantum error correction (QEC) codes.

In the classical realm, a key attraction of homomorphic encryption (HE) \cite{Gen09} is that it serves as a cryptographic primitive from which a plethora of other cryptographic protocols can be derived \cite{alloghani2019systematic}. Given the potential of quantum generalizations of HE to usher in a plethora of quantum cryptographic protocols, QHE schemes have been studied under both information-theoretic security \cite{yu2014limitations, bib:ouyang2015quantum, tan2017practical, lai2018statistically,newman2017limitations, ouyang2020homomorphic} or with computational hardness assumptions \cite{BJe15, DSS16}. 

In this paper, we discuss how a permutational-key quantum homomorphic encryption scheme with IT security \cite{bib:ouyang2015quantum} can be infused with quantum error correction.
In such an error-corrected scheme, 
the encoding of the quantum error correction code 
is a homomorphism.
This fact allows both the client and the server to apply the encoding and decoding step for the quantum error correction, and quantum error correction is agnostic of the encrypting permutation-key.
Hence the server may perform all the costly quantum error correction steps, and the client need only perform the initial encoding of the quantum error correction code, and the final decoding of the quantum error correction code.

Since the choice of quantum error correction codes to be used with the scheme is highly flexible, we are able to discuss the fault-tolerant quantum error correction, and near-term error correction using permutation-invariant codes in the context of our error-corrected quantum homomorphic encryption 
scheme.

\section{Permutational-key quantum homomorphic encryption} \label{sec:pqhe}

\subsection{Algebra of quantum homomorphic encryption}

The client, Alice, prepares the quantum state $\rho$ which she would like to encrypt and subsequently outsource to the cloud for the execution of some computation. Before Alice sends her state to the server, she encrypts her state using the encryption process, ${\rm Encr}_\kappa(\cdot)$, where \mbox{$\kappa\in K$} denotes her chosen private key, chosen uniformly at random from a set of keys given by $K$. 
From Alice's perspective, who knows the key $\kappa$, her state is,
\begin{align}
    {\rm Encr}_\kappa(\rho)
    \label{eq:encrypted-client-perspective}.
\end{align}
The server, Bob, or indeed any eavesdropper, who does not know Alice's private key, instead perceives a state mixed over all possible keys Alice may have chosen,
\begin{align}
    {\rm Encr}(\rho ) 
    =\frac{1}{|K|} \sum_{\kappa \in K} {\rm Encr}_\kappa (\rho ).
    \label{eq:encrypted-server-perspective}
\end{align}
% In the case of CV encryptions, the discrete sum could take the form of a continuous integral, and the integrand weighted by an appropriate probability distribution.

In this paper, we focus on the quantum homomorphic encryption schemes with information-theoretic security \cite{TKOCF-qhe,bib:ouyang2015quantum,ouyang2020homomorphic}.
This is in contrast to quantum homomorphic encryption schemes \cite{BJe15,alagic2017quantum} that
use classical fully homomorphic encryption \cite{DGHV10,GHS12} as a subroutine,
and thereby inherit the computational hardness assumptions of the constituent classical fully homomorphic encryption schemes.

Here, we quantify the security of Alice's data from the server using the maximal trace-distance between encrypted states, given by
\begin{align}
    \Delta = \max_{\rho, \rho' \in \mathscr S}
    \frac{1}{2}
    \| {\rm Encr}(\rho )  - {\rm Encr}(\rho' )   \|_1,\label{eq:QHE-security}
\end{align}
where $\| \cdot \|_1$ denotes the trace norm
and $\mathscr S$ is Alice's set of input density matrices. The closer $\Delta$ is to zero, the less distinguishable Alice's distinct inputs are to the server. Having $\Delta = 0$ would correspond to complete data-privacy.

Alice would like the server to compute on her encrypted quantum data. 
% like to outsource computations $\mathcal C \in \mathscr A$, where $\mathscr A:\mathscr S \to \mathscr S$ is a group of unitary channels under composition. 
We denote the server's computation on what would be the unencrypted data as the quantum channel $\mathcal C$,
and the corresponding 
computation on the encrypted quantum data as the quantum channel $\overline{\mathcal C}$.
We denote $\mathscr A$ 
and $\overline {\mathscr A}$ as the set of all possible computations on unencrypted data and encrypted data, respectively.
For quantum homomorphic encryption, 
the computation $\overline{\mathcal C}$
is independent of Alice's secret key $\kappa$. 
In this paper, the server has no intention of hiding the desired computation from the server, 
and there is no circuit-privacy in the sense that $\mathcal C$ is public information.
Denote
\begin{align}
\overline {\mathscr S}
= \{ {\rm Encr} ( \rho ) : \rho \in \mathscr S\}
\end{align}
as the set of encrypted quantum states from the server's point of view.

% formally describe the server implements the computation described by a quantum channel $\overline{\mathcal C}$ on density matrices in $\overline{ \mathscr S }$,
% where
% \begin{align}
% \overline {\mathscr S} = 
% %{\rm span}\{ |\psi\>  \in \mathscr S_\kappa: \kappa \in K  \}, 
% \bigcup_{\kappa \in K}\{ 
% \overline{ \rho } \in \mathscr S_\kappa  \}
% \end{align}
% as the set of all possible encrypted quantum states.
From the client's perspective, the server's computation on the encrypted quantum state is
\begin{align}
  \overline{\mathcal C} \circ {\rm Encr}_\kappa (\rho ).
\end{align}
Since the server has no knowledge of $\kappa$, the channel $\overline{\mathcal C}$ must be independent of $\kappa$.
We assume that there exists a $\mathcal C_\kappa \in \mathscr A$ such that 
 \begin{align}
     \includegraphics[height=0.7cm]{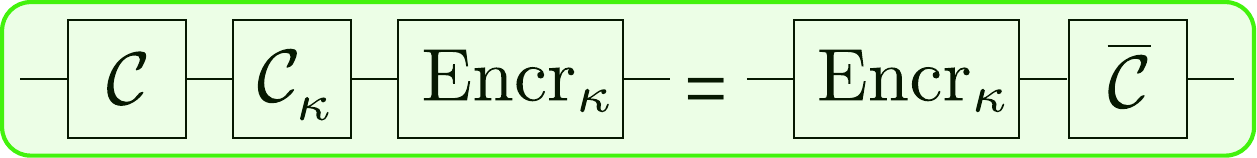}
     \label{Ic}.
 \end{align}
 In the special case where $\mathcal C_\kappa$ is the identity map, we obtain
 \begin{align}
     \includegraphics[height=0.7cm]{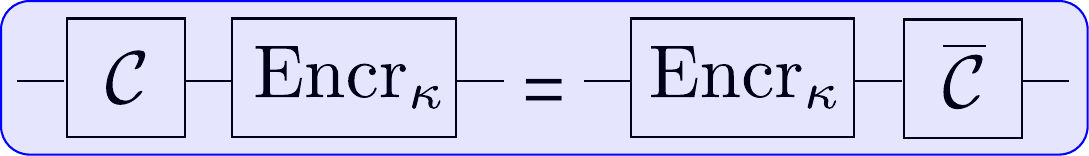}.
     \label{Ia}
 \end{align}
 %identity (I.a).

The scheme is termed as quantum homomorphic encryption, because a group homomorphism $\varphi$ relates $\overline{\mathcal C}$ to ${\mathcal C}$. Namely, there is a map $\varphi$ such that for every 
$\mathcal C_1, \mathcal C_2 \in \mathscr A$, the computations $\mathcal C_1$, $\mathcal C_2$ and 
$\mathcal C_1 \circ \mathcal C_2$ satisfy the algebraic relationship,
\begin{align}
    \varphi( \mathcal C_1 \circ \mathcal C_2 )
    = 
    \varphi( \mathcal C_1) \circ \varphi(\mathcal C_2 ).
\end{align}
Here, $\varphi( \mathcal C_1)$, $\varphi(\mathcal C_2 )$ and $\varphi( \mathcal C_1 \circ \mathcal C_2 )$ are what the server applies on $\overline{\mathscr S}$ to implement $\mathcal C_1$, $\mathcal C_2$ and $\mathcal C_1 \circ \mathcal C_2$ respectively.
For every $\mathcal C \in \mathscr A$, we write $\overline{\mathcal C} = \varphi(\mathcal C)$.
The group homomorphism $\varphi$ allows the server to compute any sequence $(\mathcal C_1, \dots, \mathcal C_s)$ using the sequence 
$(\varphi(\mathcal C_1), \dots, \varphi(\mathcal C_s))$.
We now summarize the notation of homomorphism in QHE.
\begin{definition}[Homomorphism in QHE]
Let $\mathscr A$ and $\overline{\mathscr A}$ be a group of unitary channels that acts on $\mathscr S$ and $\overline{\mathscr S}$ respectively.
The map $\varphi: \mathscr A \to \overline{\mathscr A}$ is a group homomorphism if for every $\mathcal C_1 , \mathcal C_2 \in \mathscr A$,
we have $\varphi(\mathcal C_1 \circ \mathcal C_2 ) = \varphi(\mathcal C_1) \circ \varphi(\mathcal C_2) $.
\end{definition}

The server upon completing the computation on the cipherspace $\overline{\mathscr S}$, returns the quantum state to the client. The client then decrypts this state with the decryption channel ${\rm Decr}_{\kappa,\mathcal C}(\cdot)$. 
The group homomorphism $\varphi$ that the server used before is only meaningful if after decryption, Alice recovers the correct state $\mathcal C(\rho)$. 
For the decryption to return the correct state, it suffices to require that 
 \begin{align}
     \includegraphics[height=0.7cm]{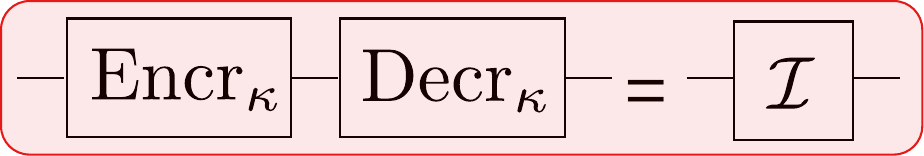}
     \label{Ib},
 \end{align}
where ${\rm Decr}_{\kappa} = {\rm Decr}_{\kappa,\mathcal I}$. 
Whenever ${\rm Encr}_{\kappa}$ is a unitary channel, the condition \eqref{Ib} is also equivalent to the requirement that ${\rm Decr}_{\kappa} = {\rm Encr}_{\kappa}^\dagger$.

The goal is that after decryption, Alice receives the correct state, no matter which secret key $\kappa$ she uses, and which computation $\mathcal C$ was evaluated. 
The following lemma tells us what decryption operation achieves this.
\begin{lemma}[Correctness]\label{lem:correctness}
Suppose that 
\begin{align}
     \includegraphics[height=0.55cm]{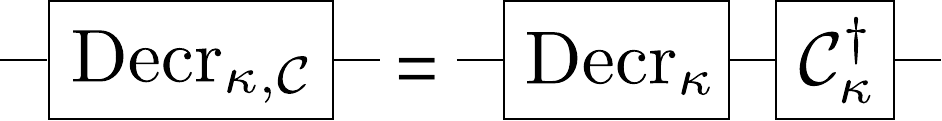}
     \label{If}.
 \end{align} 
 Then 
 \begin{align}
     \includegraphics[height=0.7cm]{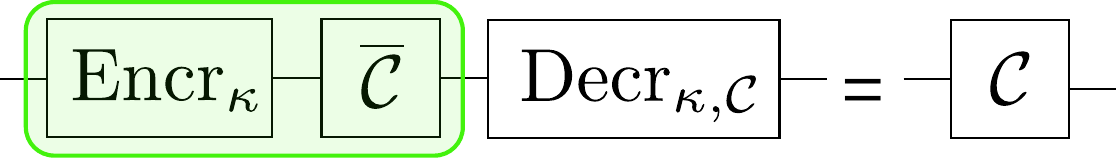}
     \label{Ig}
 \end{align}
for every $\kappa \in K$ and $\mathcal C \in \mathscr A$.
\end{lemma}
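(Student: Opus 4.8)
The plan is to prove the correctness identity \eqref{Ig} by direct composition, chaining together the three structural facts already available: the intertwining relation \eqref{Ic} between the server's encrypted computation $\overline{\mathcal C}$ and a plaintext computation $\mathcal C_\kappa\in\mathscr A$, the basic decryption identity \eqref{Ib} stating that ${\rm Decr}_\kappa$ inverts ${\rm Encr}_\kappa$, and the explicit form of the full decryption channel ${\rm Decr}_{\kappa,\mathcal C}$ supplied by the hypothesis \eqref{If} (which I expect to read ${\rm Decr}_{\kappa,\mathcal C}=\mathcal C\circ\mathcal C_\kappa^{-1}\circ{\rm Decr}_\kappa$, the natural key-dependent correction). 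Because every object in sight is a unitary channel drawn from a group, I anticipate no analytic estimates at all, only a bookkeeping of composition order and domains that makes a sequence of cancellations telescope.

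Concretely, I would start from the left-hand side of \eqref{Ig}, namely ${\rm Decr}_{\kappa,\mathcal C}\circ\overline{\mathcal C}\circ{\rm Encr}_\kappa$, and substitute the hypothesised expression for ${\rm Decr}_{\kappa,\mathcal C}$ from \eqref{If}. The crucial move is then to apply \eqref{Ic} to rewrite the inner block $\overline{\mathcal C}\circ{\rm Encr}_\kappa$ as ${\rm Encr}_\kappa\circ\mathcal C_\kappa$: this converts the server-side channel $\overline{\mathcal C}$, which acts on the cipherspace $\overline{\mathscr S}$, into a plaintext channel $\mathcal C_\kappa$ sandwiched by the encryption, so that from here on everything lives inside $\mathscr A$ and can be manipulated by the group law. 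I would next use \eqref{Ib} to collapse the resulting ${\rm Decr}_\kappa\circ{\rm Encr}_\kappa$ to the identity channel $\mathcal I$, and finally cancel the residual $\mathcal C_\kappa^{-1}\circ\mathcal C_\kappa=\mathcal I$ using that $\mathscr A$ is a group, leaving exactly $\mathcal C$.

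Two points require care rather than cleverness. First, composition order must be tracked exactly, since these are channels and \eqref{Ic}, \eqref{Ib} hold only in one orientation; I would verify that the substituted decryption places the correcting factor $\mathcal C\circ\mathcal C_\kappa^{-1}$ on the correct side so the cancellations chain cleanly. Second, the domains must match: ${\rm Encr}_\kappa$ maps $\mathscr S$ into $\mathscr S_\kappa\subseteq\overline{\mathscr S}$, and \eqref{Ic} guarantees $\overline{\mathcal C}$ returns the state to $\mathscr S_\kappa$, which is precisely where ${\rm Decr}_\kappa$ is licensed to act as the inverse of ${\rm Encr}_\kappa$ via \eqref{Ib}. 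The main, and fairly mild, obstacle is therefore confirming that $\mathcal C_\kappa$ is a genuine group element with inverse inside $\mathscr A$ and that \eqref{Ic} restricts correctly to $\mathscr S_\kappa$; once these are in hand the argument is uniform in $\kappa\in K$ and $\mathcal C\in\mathscr A$, since no step depended on a particular choice of key or computation.
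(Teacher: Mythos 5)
Your proof is correct and is essentially the paper's own argument: the paper likewise chains \eqref{Ic} (to commute $\overline{\mathcal C}$ past ${\rm Encr}_\kappa$), then \eqref{Ib} (to collapse ${\rm Decr}_\kappa \circ {\rm Encr}_\kappa$ to the identity), and finally cancels the residual key-dependent computation using the group structure of $\mathscr A$, which is exactly your telescoping chain. The only discrepancy is a harmless reparametrization of the images you could not see: in the paper \eqref{Ic} reads $\overline{\mathcal C}\circ {\rm Encr}_\kappa = {\rm Encr}_\kappa \circ \mathcal C_\kappa \circ \mathcal C$, so the correction in \eqref{If} is simply $\mathcal C_\kappa^{-1}\circ {\rm Decr}_\kappa$ rather than your $\mathcal C\circ \mathcal C_\kappa^{-1}\circ {\rm Decr}_\kappa$; under the renaming $\mathcal C_\kappa \mapsto \mathcal C_\kappa \circ \mathcal C$ your hypotheses and cancellations coincide term-for-term with the paper's intermediate identity \eqref{Ie}.
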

\begin{proof}
Using \eqref{Ib} and \eqref{Ic}, we find that
 \begin{align}
     \includegraphics[height=0.7cm]{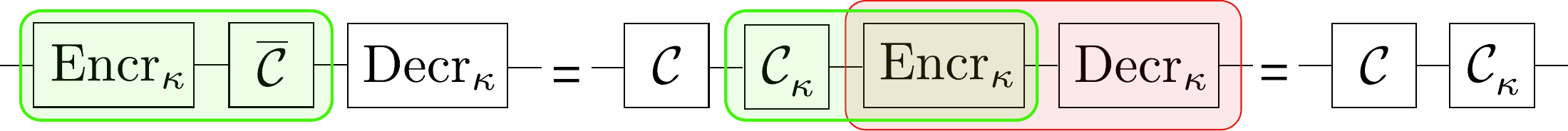}
     \label{Ie},
 \end{align}
from which the result can be deduced.
\end{proof}
The proof of Lemma \ref{lem:correctness} shows that the structure of decryption channel can be derived using the structure of the encryption channel and the delegated computation.

Apart from the homomorphism of $\varphi$ in a QHE scheme, the scheme also must keep the client's data secure, allow correct computation and compact decryption:
\begin{enumerate}

    \item {\bf Security:} 
    We say that the scheme is $\epsilon$-secure if 
    the maximal trace-distance, Eq.~(\ref{eq:QHE-security}), between encrypted states satisfies $\Delta \le \epsilon$.%, where $\Delta$ is given by Eq.~.
    
    \item {\bf Correctness:} The QHE scheme is correct if for every key $\kappa \in K$ and every computation $\mathcal C \in \mathscr A$, there is an encryption channel ${\rm Encr}_\kappa$ and decryption channel ${\rm Decr}_{\kappa,\mathcal C}$ such that Eq.~(\ref{Ig}) holds.
    
    \item {\bf Compactness:} The QHE scheme is compact if the circuit complexity of the decryption channel is at most a polynomial in the key length.%$\log(|K|)$.
\end{enumerate}
Here, {\em compactness} is a condition that sometimes ensures that Alice's decryption ${\rm Decr}_{\kappa,\mathcal C}(\cdot)$ is much easier to perform than the actual computation $\mathcal C$. 
Of course, the circuit complexity is dependent on the gate-set used for quantum computation. 
In most cases, the Boykin gate set \cite{boykin}, which comprises of Clifford gates and T-gates is used.
In the most extreme case when ${\rm Decr}_{\kappa,\mathcal C} = {\rm Decr}_{\kappa}$,
which is the case for the permutational key quantum homomorphic encryption scheme that we later focus on, the decryption is independent of $\mathcal C$ and the condition of compactness depends only on the properties of the encrypting key $\kappa$.

% \subsection{Permutational-key quantum homomorphic encryption}

% In the case of permutation key encryption we take a known code which is randomised under the encryption process via permutation operators. In the absence of Alice revealing any information the server, the server's knowledge of the state is limited to some $\epsilon$ which is obtained upon measuring the entire state.

% For the server to perform syndrome measurements, the client must reveal what the decrypted syndrome is. However in this instance this necessarily reveals some additional information beyond $\epsilon$ about the overall secret key that also encodes the logical qubit space. The distinction with the previous example is that this process reveals not only information about the syndrome, but also some logical information.

% To minimise leakage of logical information, the client can encrypt classical bits in advance and provide them to the server, selectively revealing which of these encrypted bits the server should employ. This prevents information about the secret key from being leaked if the client were to reveal additional unencrypted information. The introduction of these additional resources compensate for information leaked during syndrome measurement to preserve the desired level of $\epsilon$.

% Hence, there is a resource trade-off whereby additional ancillary qubits are introduced to preserve the desired secrecy level.

\section{Permutational-key quantum homomorphic encryption}\label{sec:apps}

% \begin{figure*}[tbp]
%     \centering
%     \includegraphics[width=\linewidth]{permutation}
%     \caption{\textbf{Obtaining a permutation-key encryption scheme that supports quantum error correction.} 
%     Circles represent qubits. Grey circles represent maximally mixed states. The green circle represents the unencoded and unencrypted quantum data.
%     The blue circles represent quantum data spread out amongst multiple qubits by unitary transformations.
%     % This corresponds to the encryption scheme shown in Fig.~\ref{fig:QHE_commutation}(c).
%     Encoding of (a) a single data qubit into (b) a four-qubit code across the rows, followed by (c) mapping into a random quantum code across five columns, with (d) encryption by permutation of columns according to a permutation-key $\kappa$. \yo{we need to swap step (c) with step (b), so that the server can do QEC on each column independently, and so not require any communication with the client. but qec on the columns could destroy those maximally mixed states.  The additional rows should not be grey. }} \label{fig:QEC_permutation}
% \end{figure*}

\subsection{Revisiting quantum homomorphic encryption scheme from quantum codes}

Here, we review the quantum homomorphic encryption scheme from quantum codes as given in Ref.~\cite{bib:ouyang2015quantum}.

In the first step, the client encodes each qubit of the quantum data into a random quantum error correction code, where the code is not used for quantum error correction,
but rather, for mapping each qubit,
$   2^{-1}\sum_{j=0}^3 a_j \sigma_j,$
to 
$    2^{-m}\sum_{j=0}^3 a_j \sigma_j^{\otimes m},$
where $\sigma_0 = I, \sigma_1 = X, \sigma_2 = Y, \sigma_3 = Z$
are the Pauli matrices.
We denote such as a mapping as
\begin{align}
    {\rm RandCode}( 2^{-1}\sum_{j=0}^3 a_j \sigma_j)
    =2^{-m}\sum_{j=0}^3 a_j \sigma_j^{\otimes m}.
\end{align}
We can achieve the encoding ${\rm RandCode}$ by 
preparing $m-1$ maximally mixed states for each data qubit,
and applying $2m-2$ CNOT gates \cite{bib:ouyang2015quantum}.
This is essentially a computation of depth 1,
and we can implement this 
using a collective rotations on single qubits, and Ising interactions that follow a star graph topology.

In the second step, one introduces $m$ additional maximally mixed states for each $m$ qubit random code, and permutes the order of these $2m$ qubits according to a permutational key $\kappa$, where
$\kappa$ is labelled by elements of $S_{2m}$,
the symmetric group of order $2m$.
Each data qubit is thus encrypted into $2m$ qubits using the same permutational key $\kappa$.
The encrypted qubits are then sent to the server. 
When the initial number of data qubits is $r$,
the number of qubits sent to the server is $2mr$.

When $m$ is an odd number, the server performs a homomorphic evaluation of Clifford gates on the encrypted quantum data by applying identical Clifford gates on each of the $2m$ qubits. 
This is because the action of the $m$ identical Clifford gates on sets of qubits that do not hold quantum data is trivial. 
In particular, the server implements single-qubit logical Clifford gates on the encrypted data by applying transversal Clifford gates on of $2m$ qubits.
The server implements logical CNOTs by applying transversal CNOTS similarly on $2m$ qubits.

Each non-Clifford $T$-gate can be implemented with a heralded 50 percent probability of success with the help of a logical gate teleportation protocol.
The logical gate teleportation protocol consumes an encrypted magic state, 
entangles the encrypted magic state with the data via transversal CNOTs, and measures the encrypted magic state. Success of the logical gate teleportation is heralded by the client based on the parity of the measurement outcome of qubits in the computational basis.

When the client encrypts $r$ data qubits, 
the trace distance between the encrypted quantum data from the server's point of view and the maximally mixed state is at most,
\begin{align}
    \Delta(r,m) = \sqrt{2^r/|K|} = 2^{r/2}\binom{2m}{m}^{-1/2}.
    \label{eq:permutation-key-security}
\end{align} 
This means that the data-privacy of this quantum homomorphic encryption scheme is $\Delta = 2\Delta(r,m).$
Hence, the scheme's data privacy approaches 0 exponentially fast in $m$ \cite{bib:ouyang2015quantum} for sufficiently slow growing $r$.
Note that the number $r$ must include the number of magic states that the client encodes.
So if the initial size of the quantum data is $s$ qubits, then the number of available magic states is $r-s$ when $r \ge s$.

The success probability of implementing $r-s$ $T$ gates can be boosted to 100 percent if we are willing to allow the client to interact with the server, and use additional ancilla qubits. In such a situation, however, the scheme will no longer be a quantum homomorphic encryption scheme, but rather, be a form of secure delegated quantum computation. 
For every encrypted magic state, the client will need to prepare an encrypted $|0\>$ state and an encrypted $|1\>$ state, and has these encrypted states assigned in random rows.
Using these three encrypted ancilla states, the client can help the server perform deterministic $T$-gates. 

For this, the server first transmits classical bits containing the measurement outcomes on the encrypted magic state to the client. Second, the client uses knowledge of the secret permutation key to evaluate the parity of a subset of these classical bits that correspond to the secret permutation. Third, based on this parity, the client tells the server the which row, where an encrypted $|0\>$ or encrypted $|1\>$ state resides, should be used to perform a controlled correction for the gate-teleportation process. 
Without the permutation key, the classical label is completely independent of the quantum data \cite{ouyang2017computing} and of the private key. 
Hence in spite of the server's classical communication with Alice, the server learns no additional information about the quantum data. 
With this method, the server can perform any number of $T$-gates, provided that Alice prepares the encrypted magic states on demand. The only cavaet is that the security of the scheme worsens exponentially with the number of encrypted $T$-gates consumed (see Eq.~(\ref{eq:permutation-key-security})).

% \subsection{Correctness of augmented scheme}
% \yo{We don't need the augmented scheme. Can just cut directly to the QEC part.}
% By augmenting the QHE protocol with classical communication, the permutation-key QHE scheme for implementing $T$-gates in \cite{bib:ouyang2015quantum} can be made deterministic. 

\subsection{Quantum error correction as a homomorphism}

Here, we will understand how the encoding map that introduces quantum error correction capabilities to the above permutational-key 
quantum homomorphic encryption scheme 
is a homomorphism.
We begin by explaining how our proposed quantum error correction encoding 
integrates with 
permutational-key quantum homomorphic encryption. 

Recall that the encryption map 
${\rm Encr}_{\kappa}$
of the permutational-key quantum homomorphic encryption scheme 
takes each data qubit into $2m$
qubits. 
It is important to note that for every data qubit $\rho$,
we can write 
\begin{align}
    {\rm Encr}_{\kappa}(\rho) = 
    \pi_\kappa
    \left( {\rm RandCode}(\rho) \otimes (I/2)^{\otimes m} \right)
    \pi_\kappa ^\dagger,\label{qhe:encryption}
\end{align}
where $\pi_\kappa$ is a representation of $\kappa$ that permutations $2m$ qubits according to the permutation $\kappa \in S_{2m}$.
Also, the encryption map acts identically on each data qubit the client has. 

Now, let us consider an encoding of a qubit into a quantum error correction code, given by the quantum channel $\mathcal Q$.
Namely, the channel 
$\mathcal Q$ maps a single qubit into an $n$ qubit quantum error correction code with distance $d$,
and hence can also be specified as a
$[[n,1,d]]$ code.
We let $\mathcal Q$ act identically on each of the $2m$ qubits in 
the right side of \eqref{qhe:encryption}, 
which then gives the state
$\mathcal Q^{\otimes 2m} (
    {\rm Encr}_{\kappa}(\rho) 
) $
for every unencrypted data qubit $\rho$.

Since the channel $\mathcal Q^{\otimes 2m}$ is an i.i.d channel and invariant under permutations, we have
\begin{align}
&\mathcal Q^{\otimes 2m} (
{\rm Encr}_{\kappa}(\rho)
)\notag\\
=&
\pi_\kappa
\left(
\mathcal Q^{\otimes 2m}
\left( {\rm RandCode}(\rho) \otimes \mathcal (I/2)^{\otimes m} \right)\right)
\pi_\kappa ^\dagger,\label{qhe:QEC-homo}
\end{align}
which amounts to having the QEC encoding as a homomorphism.
Thus, applying $\mathcal{Q}^{\otimes 2m}$ to ${\rm RandCode}(\rho) \otimes (I/2)^{\otimes m}$ is equivalent to applying it to the encrypted state ${\rm Encr}_\kappa(\rho)$ after permutation $\pi_\kappa$. Consequently, the server can perform QEC on encrypted states received from the QHE scheme, requiring only prior agreement on the QEC code (e.g., via public symmetric key distribution). Keeping the QEC code secret from adversaries further enhances security by randomizing the state over all possible codes.

After the server completes the computation, 
the server may return all the qubits back to the client.
Then the client can decode the quantum code, and subsequently decrypt the quantum information.
Alternatively, for a client with less ability to perform quantum computation, 
the client can have the server perform logical measurements on all of the encoded qubits, and send the classical information about these measurement outcomes to the client, after which the 
client processes of the classical information in the decryption step.

\subsubsection{Infusing fault-tolerance}

Fault-tolerant quantum computation will be possible if we choose $\mathcal Q$ to be any fault-tolerant QEC code.
In the paradigm of fault-tolerant quantum computation, we accept the inevitability that each elementary gate fails with some probability. In spite of noise, carefully designed fault-tolerant quantum circuitry can perform reliable quantum computation. Quantum information will be encoded into a fault-tolerant family of quantum error correction codes $\Omega =\{ \mathcal Q_t : t \ge 1 \}$ that corrects an increasing number of errors $t$.

Suppose that each physical gate fails independently with probability $p_0$, and we use a quantum code $\mathcal Q_t$ that corrects $t$ errors. We say that $\Omega$ has a fault-tolerant threshold $p_{\Omega}$ if for every positive integer $t$, the logical failure probability $p_{t,\Omega}$ of each fault-tolerant gate in $\mathcal Q_t$ satisfies the inequality,
\begin{align}
    p_{t,\Omega} \le a_\Omega (p_0 / p_{\Omega})^t,
\end{align}
for some positive $a_\Omega$. 
% Our framework is compatible with mainstream fault-tolerant quantum computation schemes that are based on Clifford computations and gate-teleportation.

Fault-tolerant syndrome extraction allow us to reliably extract information about errors as they occur. Compared to conventional non-fault-tolerant syndrome extraction, fault-tolerant syndrome extraction has additional overhead in terms of the number of ancillary qubits required. Recent advances in flag-fault-tolerant quantum computation \cite{chao2018flag, Chao2020.anyflag} allows the number of these ancilla to be bounded for each round of fault-tolerant syndrome extraction. In the Aliferis, Gottesman and Preskill framework of fault-tolerant quantum computation \cite{AGP05}, each fault-tolerant gadget is both preceded and followed by rounds of fault-tolerant syndrome extraction. Hence, for the fault-tolerant evaluation of a quantum circuit of depth $\depth$ on $r$ qubits, the total number of fault-tolerant syndrome extractions required for fault-tolerant evaluation is at most $r(\depth + 1)$. Each fault-tolerant syndrome extraction requires at most
\begin{align}
    A_{n,t} = 2tn[t(t+2)+n] \binom t 2
    \le t^3 n (t^2+2t+n),
\end{align}
ancillary qubits \cite{Chao2020.anyflag}\footnote{This equation does not directly appear in this reference, but through private communication with Chao was shown to follow from these results.} for a length $n$ quantum error correction code that corrects $t$ errors.
The server can fully prepare and manipulate these ancillary qubits, so the client does not need to prepare or send them.

% the total number of logical ancillary qubits that we require for the entire computation will be at most $3r(\depth + 1)A_{n,t}$, where the factor of three arises from the need to use encrypted $|0\>$ and $|1\>$ states.
% Hence for the QEC enhanced scheme, the trace distance between the encrypted state as perceived by the server and the maximally mixed state is at most,
% \begin{align}
%     \overline \Delta = \Delta(r + 3r(\depth + 1)A_{n,t}, m).
% \end{align}
% The total number of qubits used in the scheme is thus, 
% \begin{align}
%     N_{\rm tot} = 2mr(1+3(\depth+1)A_{n,t}). 
%     \label{eq:Ntot}
% \end{align}

\subsubsection{Permutation-invariant code implementation}

If the physical qubits of the server encounter errors such as untracked qubit loss (deletions), or untracked qubit insertions, such errors cannot be straightforwardly corrected using traditional quantum error correction codes.
Fortunately, 
 permutation-invariant codes \cite{Rus00,PoR04,ouyang2014permutation,ouyang2015permutation,OUYANG201743,movassagh2020constructing,ouyang2024measurement}
 can correct deletions \cite{ouyang2021permutation,ouyang2022finite,hagiwara2021four} and insertion errors 
\cite{hagiwara2021four,shibayama2021equivalence,shibayama2022equivalence}.

Permutation-invariant codes are a special family of QEC codes that are invariant under any permutation of the underlying particles, and it is this symmetry, on top of their distance, that 
allows the correction of insertion and deletions.

Permutation invariant codes, apart from being able to correct insertions and deletions, 
also admit near-term implementations. 
This is because the controllability of permutation-invariant codes by global fields, which could allow for their scalable physical implementations \cite{johnsson2020geometric} in near-term devices such as trapped ions or ultracold atoms where addressability is challenging due to cross-talk. 

To allow fault-tolerance, permutation-invariant codes can be used as inner codes, in a code-concatenation with outer codes that allow fault-tolerant gates, such as certain quantum LDPC codes. 
Initial steps in this direction has been made, for example in the study of the belief-propagation type decoding of erasure errors in permutation-invariant codes concatenated with quantum LDPC codes \cite{kuo2024degenerate}.

\subsection{Correctness, data-security and compactness of error-corrected scheme}

{\bf Correctness:-}
The correctness of the error-corrected scheme is guaranteed as long as there are not too many errors that afflict the scheme.
The only way for an honest server to deviate from the correctness condition is when the error-correction step experiences a logical error. 
Provided that the error rates in the scheme are below the fault-tolerant threshold, the correctness of the error corrected protocol can be made arbitrarily good, at the expense of introducing an increasing amount of qubit overhead.

{\bf Data-security:-}
The security of the permutation-key scheme worsens exponentially with the number $r$ of $T$-gates and ancilla qubits used. However, the security of the permutation-key scheme improves exponentially with $m$. Hence if we increase $m$ together with $r$, we can guarantee a constant level of security. 
This data-security of the error-corrected scheme is identical to that of the vanilla QHE scheme for an honest server. 
Conditioned on the correctness of the 
error-corrected protocol, 
the if the server is dishonest, 
the 
data-security of the error-corrected scheme is identical to that of the vanilla scheme. Intuitively, this is because of the fundamental information-disturbance tradeoff relation, which can be applied in our scheme using Ref.~\cite{ouyang2017computing,ouyang-approx}.

{\bf Compactness:-}
The compactness of the error-corrected protocol is identical to the compactness of the vanilla QEC protocol.
% The condition of compactness is highly dependent on the model of quantum computation that is considered. 

First we consider the 
case where after the server completes the computation, 
the server sends all the qubits back to the client, and the client has to decode the quantum error correction code.
 The decryption algorithm involves
 two steps (1) first decoding the quantum error correction code, and second (2) unpermuting the columns of qubits. 
For step (1), this depends on the complexity of the decoding circuit of the quantum error correction code. We may choose $n$ to be linear in $m$, and choose quantum error correction codes with efficient decoding circuits, so that the complexity of the decoding circuits is at most polynomial in $m$.

 Step (2) involves unpermuting $2m$ columns of qubits arranged in $rn$ rows, where $r$ is the data qubits the client uses. Hence, the number of swaps needed to perform the permutation for the decryption is $O(rnm)$. When $m$ is linear in $r$ (which allows the data-security condition of the scheme to hold), the complexity of decryption is $O(r^2 {\rm poly}(r))$,
 and in such a scenario, the scheme is compact. 

When the server sends just classical information about the logical measurements back to the client, the decryption algorithm can have some savings. All that is needed then is to identify $m$ out of $2m$ columns that correspond to the $rm$ bits of data needed for the decryption step.  The identification procedure only needs $O(rm)$ steps, and computing the correct parity of the measurement outcomes as per Ref.~\cite{ouyang2017computing} involves $O(m)$ steps, and when $m$ is linear in $r$,
this means that the decryption step is $O(r^2)$ in complexity, and the scheme is also compact.

 \section{Conclusion} \label{sec:conclusion}

We have discussed how a permutational-key quantum homomorphic encryption scheme can be imbued with a form of homomorphic quantum error correction.
The homomorphism allows both the client and the sever to perform quantum error correction, though the largest benefit is having the server to be able to perform all the difficult quantum error correction steps without compromising the security of the vanilla scheme.

% We demonstrated the applicability of our approach by application to stabilizer codes with Pauli-key encryption, permutation-key encryption, and optical networks with displacement-key encryption, thereby demonstrating its viability in the context of well-known discrete- and continuous-variable encryption techniques.

% Given that large-scale quantum computation necessarily requires quantum error correction and that for economic reasons cloud-based deployment of quantum computation will inevitably dominate the future quantum landscape, the ability to reconcile these two techniques is essential. Simultaneously, the high value of future cloud-based quantum computations mandates the assurance of information security of computational results.

% Our technique applies broadly to many quantum computing architectures and associated quantum error correction techniques where client-side quantum resources are assumed to be minimal, typically restricted to only single-qubit state preparation and communication, something expected to be enabled by a future global quantum internet.

\section*{Acknowledgements}
We thank Rui Chao for discussions, and thank Yanglin Hu and Marco Tomamichel for their feedback.
Peter Rohde is funded by an ARC Future Fellowship (project FT160100397). This research was conducted by the Australian Research Council Centre of Excellence for Engineered Quantum Systems (project CE170100009) and funded by the Australian Government. Yingkai Ouyang is supported by the Quantum Engineering Programme grant NRF2021-QEP2-01-P06, and also in part by NUS startup grants (R-263-000-E32-133 and R-263-000-E32-731), and the National Research Foundation, Prime Minister’s Office, Singapore and the Ministry of Education, Singapore under the Research Centres of Excellence program.
Y.O. also acknowledges support from EPSRC Grant No. EP/W028115/1 and also the EPSRC funded QCI3 Hub under Grant No. EP/Z53318X/1.

\bibliography{qecqhe}

\end{document}